\def\bx{\vec x}
\def\by{\vec y}
\def\bp{\vec p}
\def\bz{\vec z}
\numberwithin{theorem}{section}
\numberwithin{lemma}{section}
\numberwithin{corollary}{section}
\title{The Combinatorial World (of Auctions) According to GARP}
\author{Shant Boodaghians\inst{1} \and Adrian Vetta\inst{2}}
\institute{Department of Mathematics and Statistics, McGill University
\email{shant.boodaghians@mail.mcgill.ca}
\and Department of Mathematics and Statistics, and School of Computer 
Science, McGill University
\email{vetta@math.mcgill.ca}}
\begin{document}

\maketitle

\addtocounter{footnote}{-2}

\begin{abstract}
Revealed preference techniques are used to test whether a data set is 
compatible with rational behaviour. 
They are also incorporated as constraints in mechanism design to encourage 
truthful behaviour in applications such as combinatorial auctions. 
In the auction setting, we present an efficient combinatorial algorithm to 
find a virtual valuation function with the optimal (additive) rationality 
guarantee. 
Moreover, we show that there exists such a valuation function that both is individually rational and is minimum 
(that is, it is component-wise dominated by any other individually rational, 
virtual valuation function that approximately fits the data). 
Similarly, given upper bound constraints on the valuation function, 
we show how to fit the maximum virtual valuation function with the optimal 
additive rationality guarantee. 
In practice, revealed preference bidding constraints are very demanding. 
We explain how approximate rationality can be used to create relaxed revealed 
preference constraints in an auction. 
We then show how combinatorial methods can be used to implement these relaxed 
constraints. 
Worst/best-case welfare guarantees that result from the use of such mechanisms 
can be quantified via the minimum/maximum virtual valuation function.
\end{abstract}

\section{Introduction}

Underlying the theory of consumer demand is a standard rationality assumption: 
given a set of items with price vector $\bp$, a consumer will demand the 
bundle $\bx$ of maximum utility whose cost is at most her budget $B$. 
Of fundamental import, therefore, is whether or not the decision making \
behaviour of a real consumer is consistent with the maximization of a utility function. 
Samuelson~\cite{Sam38,Sam48} introduced {\em revealed preference} to provide 
a theoretical framework within which to analyse this question. 
Furthermore, this concept now lies at the heart of current empirical work in 
the field; see, for example, Gross~\cite{Gro95} and Varian~\cite{Var07}. 
Specifically, Samuelson~\cite{Sam38} conjectured that the 
{\em weak axiom of revealed preference} ({\sc warp}) was a necessary and 
sufficient condition for {\em integrability} -- 
the ability to construct a utility function which fits observed behaviour. 

However, Houtthakker~\cite{Hou50} proved that the weak axiom was insufficient.
Instead, he presented a {\itshape strong axiom of revealed preference} 
({\sc sarp}) and showed non-constructively that it was necessary and 
sufficient in the case where behaviour is determined via a single-valued 
demand function. 
Afriat~\cite{Afr67} provided an extension to multi-valued demand functions 
-- where ties are allowed -- by showing that the 
{\em generalized axiom of revealed preference} ({\sc garp}) 
is necessary and sufficient for integrability.%
\footnote{
	Afriat~\cite{Afr67} gave several equivalent necessary and sufficient 
	conditions for integrability.
	One of these, {\em cyclical consistency}, is equivalent to {\sc garp} 
	as shown by Varian~\cite{Var82}.
} 
Furthermore, Afriat's approach was constructive 
(producing monotonic, concave, piecewise-linear utility functions) 
and applied to the setting of a finite collection of observational data. 
This rendered his method more suitable for practical use. 

In addition to its prominence in testing for rational behaviour, 
revealed preference has become an important tool in mechanism design. 
A notable area of application is auction design. 
For combinatorial auctions, Ausubel, Cramton and Milgrom~\cite{ACM06} 
proposed bidding activity rules based upon {\sc warp}. 
These rules are now standard in the combinatorial clock auction, 
one of the two prominent auction mechanisms used to sell bandwidth. 
In part, the {\sc warp} bidding rules have proved successful because 
they are extremely difficult to game \cite{Cra13}. 
Harsha et al.~\cite{HBP10} examine {\sc garp}-based bidding rules, 
and Ausubel and Baranov~\cite{AB14} advocate incorporating such constraints 
into bandwidth auctions. 
Based upon Afriat's theorem, these {\sc garp}-based rules imply that there 
always exists a utility function that is compatible with the bidding history. 
This gives the desirable property that a bidder in an auction will always have 
at least one feasible bid 
-- a property that cannot be guaranteed under {\sc warp}.

Revealed preference also plays a key role in motivating the 
{\em generalised second price mechanism} used in adword auctions. 
Indeed, these position auctions have welfare maximizing solutions 
with respect to a revealed preference equilibrium concept; 
see \cite{Var07} and \cite{EOS07}. 

\subsection{Our Results}
Multiple methods have been proposed to approximately measure how consistent a 
data set is with rational behaviour; 
see Gross~\cite{Gro95} for a comparison of a sample of these approaches. 
In this paper, we show how a graphical viewpoint of revealed preference 
can be used to obtain a virtual valuation function that best fits the data set. 
Specifically, we show in Section~\ref{sec:approx-vv} that an individually 
rational virtual valuation function can be obtained such that its 
additive deviation from rationality is exactly the minimum mean length of a 
cycle in a bidding graph. 
This additive guarantee cannot be improved upon. 
Furthermore, we show there exists a unique {\em minimum} valuation function 
from amongst all individually rational virtual valuation functions that 
optimally fit the data. 
Similarly, given a set of upper bound constraints, we show how to find the 
unique {\em maximum} virtual valuation that optimally fits the data, if it exists.

Imposing revealed preference bidding rules can be harsh. 
Indeed, Cramton~\cite{Cra13} states that 
``there are good reasons to simplify and somewhat weaken the revealed 
preference rule". 
These reasons include complexity issues, common value uncertainty, 
the complication of budget constraints, and the fact that a bidder's 
assessment of her valuation function often {\em changes} as the 
auction progresses! 
The concept of approximate rationality, however, naturally induces a 
relaxed form of revealed preference rules. 
We examine such relaxed bidding rules in Section~\ref{sec:mech-design}, 
show how they can be implemented combinatorially, and show how 
to construct the minimal and maximal valuation functions which fit the data, 
which may be useful for quantifying worst/best-case welfare guarantees.

\section{Revealed Preference}\label{sec:rp}
In this section, we first review revealed preference. 
We then examine its use in auction design and describe how to formulate it 
in terms of a bidding graph.

\subsection{Revealed Preference with Budgets}\label{sec:standard-rp}
The standard revealed preference model instigated by Samuelson \cite{Sam38} 
is as follows. 
We are given a set of observations 
\[
	\{(B_1,\bp_1, \bx_1),\,(B_2,\bp_2, \bx_2),\,\dotsc,(B_T,\bp_T, \bx_T)\}
	\enspace.
\] 
At time $t$, $1\le t\leq T$, the set of items has a price vector $\bp_t$ and 
the consumer chooses to spend her budget $B_t$ on the bundle $\bx_t$.%
\footnote{
	It is not necessary to present the model in terms of ``time". 
	We do so because this best accords with the combinatorial auction application.
} 
We say that bundle $\bx_t$ is (directly) revealed preferred to bundle $\by$, 
denoted $\bx_t\succeq \by$, if $\by$ was affordable when $\bx_t$ was purchased. 
We say that bundle $\bx_t$ is strictly revealed preferred to bundle~$\by$, 
denoted $\bx_t\succ \by$, if $\by$ was (strictly) cheaper than $\bx_t$ 
when $\bx_t$ was purchased. 
This gives {\itshape revealed preference}~(\ref{rp}) and 
{\em strict revealed preference}~(\ref{srp}):
\begin{eqnarray}
	\bp_t\cdot\by \le  \bp_t\cdot\bx_t  \ \ \Rightarrow \ \ \bx_t\succeq \by 
		\label{rp} \\
	\bp_t\cdot\by < \bp_t\cdot\bx_t  \ \ \Rightarrow \ \  \bx_t\succ \by 
		\label{srp}
\end{eqnarray}

Furthermore, a basic assumption is that the consumer optimises a locally 
non-satiated utility function.%
\footnote{
	Local non-satiation states that for any bundle $\bx$ there is a more 
	preferred bundle arbitrarily close to $\bx$. 
	A monotonic utility function is locally non-satiated, but the converse 
	need not hold.
} 
Consequently, at time $t$ she will spend her entire budget, $i.e.$, 
$\bp_t\cdot\bx_t=B_t$. 
In the absence of ties, preference orderings give relations that are 
anti-symmetric and transitive. 
This leads to an axiomatic approach to revealed preference formulated in 
terms of {\sc warp} and {\sc sarp} by Houthakker~\cite{Hou50}. 
The {\em weak axiom of revealed preference} ({\sc warp}) states that the 
relation should be asymmetric, $i.e.$ 
${\bx\succeq\by\Rightarrow\by\not \succeq\bx}$. 
Its transitive closure, the {\em strong axiom of revealed preference} 
({\sc sarp}) states that the relation should be acyclic. 
Our interest lies in the general case where ties are allowed. 
This produces what we dub the 
{\itshape $k$-th Axiom of Revealed Preference} ({\sc karp}): 
Given a fixed integer $k$ and any $\kappa\le k$
\begin{equation}
	\bx_{t}= \bx_{t_0}
		\succeq 	\bx_{t_1} 
		\succeq 	\cdots 
		\succeq \bx_{t_{\kappa-1}} 
		\succeq \bx_{t_\kappa} = \by 
	\quad
	\Rightarrow
	\quad
	\by \nsucc \bx_{t} \label{krp} \enspace .
\end{equation}

There are two very important special cases of {\sc karp}. 
For $k=1$, this is simply {\sc warp}, 
$i.e.$ $\bx_{t}\succeq \by \, \Rightarrow \, \by\not\succ \bx_{t}$. 
This is just the basic property that for a preference ordering, 
we cannot have that $\by$ is strictly preferred to $\bx_t$ if $\bx_t$ is 
preferred to $\by$. 
On the other hand, suppose we take $k$ to be arbitrarily large 
(or simply larger than the total number of observed bundles). 
Then we have the {\em Generalized Axiom of Revealed Preference} ({\sc garp}), 
the simultaneous application of {\sc karp} for each value of $k$. 
In particular, {\sc garp} encodes the property of transitivity of 
preference relations. 
Specifically, for any $k$, if \[
	\bx_{t}= \bx_{t_0}
		\succeq \bx_{t_1} 
		\succeq \cdots 
		\succeq \bx_{t_{k-1}} 
		\succeq \bx_{t_k}=\by\enspace ,
\] then, by transitivity, $\bx_t \succeq \by$. 
The first axiom of revealed preference then implies that $\by\nsucc \bx_t$.

The underlying importance of {\sc garp} follows from a classical result 
of Afriat~\cite{Afr67}: 
there exists a nonsatiated, monotone, concave utility function 
that rationalizes the data if and only if the data satisfy {\sc garp}. 
Brown and Echenique \cite{BE09} examine the setting of indivisible goods 
and Echenique et al. \cite{EGW11} consider the consequent 
computational implications. 

\subsection{Revealed Preference in Combinatorial Auctions}\label{sec:auction-rp}
As discussed, a major application of revealed preference in mechanism design 
concerns combinatorial auctions. 
Here, there are some important distinctions from the standard revealed 
preference model presented in Section~\ref{sec:standard-rp}. 
First, consumers are assumed to have quasilinear utility functions that are 
linear in money. 
Thus, they seek to maximise profit. 
Second, the standard assumption is that bidders have {\em no} budgetary constraints. 
For example, if profitable opportunities arise that require large 
investments then these can be obtained from perfect capital markets. 
(This assumption is slightly unrealistic; Harsha et al.~\cite{HBP10} 
show how to implement a budgeted revealed preference model for 
combinatorial auctions; see also Section~\ref{sec:other}). 

Third, the observations $(\bp_t, \bx_t)$, for each $1\le t\le T$, 
are typically not purchases but are bids made over a collection of auction rounds. 
When offered a set of prices at time $t$ the consumer bids for bundle $\bx_t$. 

So what would a model of revealed preference be in this combinatorial auction setting? Suppose that at time $t$ we select bundle $\bx_t$ and that at an 
earlier time $s$ we selected bundle $\bx_s$.
Assuming a quasi-linear utility function and no budget constraint, 
we have revealed:
\begin{eqnarray}
	v(\bx_t) -\bp_t\cdot\bx_t &\ \ge\ & v(\bx_s) -\bp_t\cdot\bx_s  
		\label{eq:RP2}\\
	v(\bx_s) -\bp_s\cdot\bx_s &\ \ge\ & v(\bx_t) -\bp_s\cdot\bx_t  
		\label{eq:RP1}
\end{eqnarray}

Summing Inequalities (\ref{eq:RP2}) and (\ref{eq:RP1}) and rearranging gives
\begin{equation}
	\label{eq:RP-auction}
	(\bp_t-\bp_s)\cdot\bx_s \ \ge\  (\bp_t-\bp_s)\cdot\bx_t
\end{equation}
This is the revealed preference condition for combinatorial auctions proposed 
as a bidding activity rule by Ausubel, Crampton and Milgrom~\cite{ACM06}.
The activity rule simply states that, between time $s$ and time $t$, 
the price of bundle $\bx_s$ must have risen by at least as
much as the price of $\bx_s$. 
If condition (\ref{eq:RP-auction}) is not satisfied then the auction 
mechanism will not allow the later bid to be made.

Observe that the bidding rule (\ref{eq:RP-auction}) was derived directly 
from the assumption of utility maximisation. 
This unbudgeted revealed preference auction model can, though, 
also be viewed within the framework of the standard budgeted model of 
revealed preference.
To do this, we assume the bidder has an arbitrarily large budget $B$.
In particular, prices will never be so high that she cannot afford to buy 
every item. 
Second, to model quasilinear utility functions, we treat money as a good.
Specifically, given a bundle of items $\bx = (x_1,\,\dotsc,\, x_n)$ and an 
amount $x_0$ of money we denote by $\hat\bx=  (x_0, x_1,\,\dotsc,\, x_n)$ 
the concatenation of $x_0$ and $\bx$. 
If $\bp=(p_1,\dots,p_n)$ is the price vector for the the non-monetary items, 
then $\hat\bp=(1,p_1,\dots,p_n)$ gives the prices of all items including money.

In this $n+1$ dimensional setting, let us select bundle $\hat\bx_t$ at time $t$. 
As the budget $B$ is arbitrarily large, 
we can certainly afford the bundle ${\bx}_s$ at this time. 
But we may not be able to afford bundle $\hat\bx_s$, 
as then we must also pay for the monetary component at a cost of 
$B-\bp_s\cdot\bx_s$. 
However, we can afford the bundle ${\bx}_s$ plus an amount $B-\bp_t\cdot\bx_s$ 
of money. 
Applying revealed preference to $\{\hat\bx, \hat\bp\}$, 
we have revealed that 
${\hat\bx_t =(B-\bp_t\cdot\bx_t, \bx_t)\succeq (B-\bp_t\cdot\bx_s, \bx_s)}$. 
Hence, by quasilinearity, subtracting the monetary component from both sides, 
we have, 
\begin{equation*}
	(0, \bx_t)
		\, \succeq \, ((B-\bp_t\cdot\bx_s)-(B-\bp_t\cdot\bx_t), \bx_s)
		\, = \,        (\bp_t\cdot\bx_t-\bp_t\cdot\bx_s, \bx_s)
	\enspace .
\end{equation*}
Equivalently, 
\begin{equation}
	v(\bx_t) \ge v(\bx_s)+\bp_t\cdot\bx_t-\bp_t\cdot\bx_s \enspace.
	\label{eq:RP-restate}
\end{equation} 
But Inequality (\ref{eq:RP-restate}) is equivalent to Inequality (\ref{eq:RP2}). 
Inequality (\ref{eq:RP1}) follows symmetrically, and together 
these give the revealed preference bidding rule (\ref{eq:RP-auction}). 
Note that this bidding rule is derived via the direct comparison of two bundles.

We can now extend this bidding rule to incorporate indirect comparisons 
in a similar fashion to the extension from {\sc warp} to {\sc sarp} 
via transitivity. 
This produces a {\sc garp}-based bidding rule. 
Namely, suppose we bid for the money-less bundle $\bx_i$ at time $t_i$, 
for all $0\leq i\leq k$, where $1\leq t_i\leq T$. 
Thus we have revealed that
\begin{align*}
	(0, \bx_i) 
		\, &\succeq \, 
			((B-\bp_i\cdot\bx_{i+1})-(B-\bp_i\cdot\bx_i), \bx_{i+1})\\
		\, &= \,
			(\bp_i\cdot\bx_i-\bp_i\cdot\bx_{i+1}, \bx_{i+1})
\end{align*}
This induces the inequality 
\begin{equation}
	v(\bx_i)-\bp_i\cdot\bx_i \geq v(\bx_{i+1})-\bp_i\cdot\bx_{i+1}
	\enspace . \label{eq:rev}
\end{equation}
Summing (\ref{eq:rev}) over all $i$, we obtain 
\begin{equation*}
	\sum_{i=0}^k \left( 
		v(\bx_i)-\bp_i\cdot\bx_i
	\right)\,\geq\,\sum_{i=0}^k \left( 
		v(\bx_{i+1})-\bp_i\cdot\bx_{i+1}
	\right) \enspace, 
\end{equation*} where the sum in the subscripts are taken modulo $k$. 
Rearranging now gives the combinatorial auction {\sc karp}-based 
bidding activity rule:
\begin{equation}\label{eq:garp}
	(\bp_k-\bp_0)\cdot\bx_0
	\, \geq \,
	\sum_{i=1}^k  (\bp_i-\bp_{i-1})\cdot\bx_i \enspace .
\end{equation}
For $k$ arbitrarily large, this gives the {\sc garp}-based bidding rule. 
In order to qualitatively analyze the consequences of imposing 
{\sc karp}-based activity rules, it is informative to now provide a 
graphical interpretation of the these rules.

\subsection{A Graphical View of Revealed Preference}\label{multigraph}
Given the set of price-bid pairings $\{(\bp_t, \bx_t): 1\le t\le T\}$, 
we create a directed graph $G=(V,A)$, called the {\em bidding graph}, 
to which we will assign arc lengths $\ell$. 
There is a vertex in $V$ for each possible bundle 
-- that is, there are $2^n$ bundles in an $n$-item auction. 
For each observed bid $\bx_t$, $1\le t\le T$, there is an arc $(\bx_t, \by)$ 
for each bundle $\by\in V$. 
In order to define the length $\ell_{\bx_t,\by}$ of an arc $(\bx_t, \by)$, 
note that Inequality~(\ref{eq:RP2}) applied to $\bx_s=\by$ gives
\begin{equation*}
	v(\by)\ \le\  v(\bx_t)+ \bp_t\cdot (\by- \bx_t ) \enspace,
\end{equation*} 
otherwise we would prefer bundle $\by$ at time $t$. For the arc length, we would like to simply set  $\ell_{\bx_t,\by} = \bp_t\cdot (\by - \bx_t )$. Observe, however, that the bundle $\bx_t$ may be chosen in more than one time period. That is, possibly $\bx_t=\bx_{t'}$ for some $t\neq t'$. Therefore the bidding graph is, in fact, a multigraph. It suffices, though, to represent only the most stringent constraints imposed by the bidding behaviour. Thus, we obtain a simple graph by setting 
\begin{equation*}
	\ell_{\bx_t, \by} \, =\, 
	\min_{t'} \,\{ 
		\bp_{t'}\cdot (\by - \bx_t): \bx_{t'} =\bx_t
	\}\enspace .
\end{equation*} 
Now the {\sc warp}-based bidding rule (\ref{eq:RP-auction}) of Ausubel et al. \cite{ACM06} is equivalent to 
\begin{equation*}
	(\bp_t-\bp_s)\cdot\bx_s - (\bp_t-\bp_s)\cdot\bx_t \ge  0\enspace .
\end{equation*} 
\vspace*{-2em}

\noindent However,
\vspace*{-.75em}
\begin{align*}
	&\ell_{\bx_s,\bx_t} + \ell_{\bx_t,\bx_s} \\
	&\quad= \min_{s'} \,\{ 
		\bp_{s'}\cdot (\bx_t- \bx_s): \bx_{s'}=\bx_s
	\} +\min_{t'} \,\{ 
		\bp_{t'}\cdot (\bx_s - \bx_t): \bx_{t'}=\bx_t
	\}\\
 	&\quad\leq \bp_s\cdot (\bx_t - \bx_s ) + \bp_t\cdot (\bx_s - \bx_t ) \\
	&\quad= (\bp_t-\bp_s)\cdot\bx_s - (\bp_t-\bp_s)\cdot\bx_t\enspace .
\end{align*}
It is then easy to see that the bidding constraint (\ref{eq:RP-auction}) is 
violated if and only if the bidding graph contains no negative digons 
(cycles of length two). 
Furthermore, we can interpret {\sc karp} and {\sc garp} is a similar fashion. 
Hence, the $k$-th axiom of revealed preference is equivalent to requiring that 
the bidding graph not contain any negative cycles of cardinality at most $k+1$, 
and {\sc garp} is equivalent to requiring no negative cycles at all. 
Thus, we can formalize the preference axioms in terms of the lengths of 
negative cycles in a directed graph. 
We remark that a cyclic view of revealed preference is briefly outlined 
by Vohra~\cite{Voh11}. 
For us, this cyclic formulation has important consequences in testing 
for the extent of bidding deviations from the axioms. 
We will quantify this exactly in Section \ref{sec:approx-vv}. 
Before doing so, though, we remark that the focus on cycles also has 
important computational consequences. 

First, recall that the bidding graph $G$ contains an exponential number 
of vertices, one for every subset of the items. 
Of course, it is not practical to work with such a graph. 
Observe, however, that a bundle ${\by\notin \{\bx_1, \bx_2\dots, \bx_T\}}$ 
has zero out-degree in $G$. 
Consequently, $\by$ cannot be contained in any cycle. 
Thus, it will suffice to consider only the subgraph induced by the bids 
$\{\bx_1, \bx_2\dots, \bx_T\}$. 
In a combinatorial auction there is typically one bid per time period and the 
number of periods is quite small.%
\footnote{
	For example, in a bandwidth auction there are at most a few hundred rounds.
} 
Hence, the induced subgraph of the bidding graph that we actually need is of a 
very manageable size.

Second, one way to implement a bidding rule is via a mathematical program; 
see, for example, Harsha et al.~\cite{HBP10}. 
The cyclic interpretation of a bidding rule has two major advantages: 
we can test the rule very quickly by searching for negative cycles in a graph. 
For example, we can test for negative cycles of length at most $k+1$ either by 
fast matrix multiplication or directly by looking for shortest paths of length 
$k$ using the Bellman-Ford algorithm in $O(T^3)$ time. 
Another major advantage is that a bidder can interpret the consequence of 
a prospective new bid dynamically by consideration of the bidding graph. 
This is extremely important in practice. 
In contrast, bidding rules that require using an optimization solver as a 
black-box are very opaque to bidders.

\section{Approximate Virtual Valuation Functions}\label{sec:approx-vv}
For combinatorial auctions, Afriat's result that {\sc garp} is necessary 
and sufficient for rationalisability can be reformulated as:
\begin{theorem}\label{thm:garp suffices} 
	A valuation function which rationalises bidding behaviour exists if and 
	only if the bidding graph has no negative cycle.
\end{theorem}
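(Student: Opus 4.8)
The plan is to prove this as an equivalence reformulating Afriat's theorem, where the key observation is that the bidding graph's negative cycles correspond exactly to violations of the \textsc{garp}-based bidding rule (\ref{eq:garp}). I would first establish the backward direction: if a rationalising valuation function $v$ exists, then no negative cycle can appear. Suppose toward a contradiction that there is a cycle $\bx_0 \succeq \bx_1 \succeq \cdots \succeq \bx_k = \bx_0$ of negative total length. Since $v$ rationalises the data, each arc satisfies the revealed-preference inequality (\ref{eq:rev}), namely $v(\bx_i) - \bp_i\cdot\bx_i \ge v(\bx_{i+1}) - \bp_i\cdot\bx_{i+1}$. Summing these inequalities around the cycle, the valuation terms telescope to zero, and what remains is precisely the statement that the total arc length $\sum_i \ell_{\bx_i,\bx_{i+1}} \ge 0$, contradicting negativity. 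This direction is essentially the derivation already carried out in Section~\ref{sec:auction-rp} read in reverse.

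For the forward direction, I would show that if the bidding graph contains no negative cycle, then a rationalising valuation function can be constructed. The natural approach is to define $v$ via shortest-path distances: fix a source and set $v(\bx_t)$ to be the shortest-path length from that source to $\bx_t$ in the bidding graph (or its negation, depending on sign conventions). Because there are no negative cycles, these shortest-path distances are well-defined and finite. The shortest-path optimality condition $d(\bx_{i+1}) \le d(\bx_i) + \ell_{\bx_i,\bx_{i+1}}$ then translates directly, after substituting the definition $\ell_{\bx_i,\bx_{i+1}} = \min_{t'}\{\bp_{t'}\cdot(\bx_{i+1}-\bx_i): \bx_{t'}=\bx_i\}$, into exactly the required valuation inequality (\ref{eq:rev}) for every observed pair. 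This gives a valuation function consistent with every revealed-preference constraint, hence one that rationalises the bidding behaviour.

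The main obstacle, and the part deserving the most care, is confirming that a valuation function satisfying all the pairwise inequalities (\ref{eq:rev}) is genuinely equivalent to Afriat's notion of a \emph{rationalising} valuation in the quasilinear combinatorial-auction setting. One must verify that satisfying these finitely many linear constraints on the observed bundles suffices to rationalise the data -- i.e.\ that at each time $t$ the bundle $\bx_t$ is profit-maximising among the relevant alternatives -- which is where the reduction to the induced subgraph on $\{\bx_1,\dots,\bx_T\}$ (justified earlier by the zero out-degree of unobserved bundles) does the work. Once the equivalence between the graph-cycle condition and the system (\ref{eq:rev}) is pinned down, the theorem follows by invoking Afriat's theorem, since the no-negative-cycle condition is, by the graphical reformulation of Section~\ref{multigraph}, simply a restatement of \textsc{garp}.
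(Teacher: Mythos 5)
Your proposal is correct and follows essentially the same route as the paper: the paper derives this theorem as the $\epsilon=0$ special case of Theorem~\ref{thm:mu}, whose proof uses exactly your two ingredients --- shortest-path distances (potentials) in the bidding graph to construct the valuation when no negative cycle exists, and telescoping the revealed-preference inequalities around a cycle to show that a rationalising valuation forces every cycle to have non-negative length. Your closing worry about invoking Afriat's theorem is unnecessary, since the first two paragraphs already constitute a self-contained proof (the shortest-path distances are defined on all bundles because the bidding graph has an arc from each observed bid to every bundle).
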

This is a simple corollary of Theorem \ref{thm:mu} below; 
see also \cite{Voh11}. 
From an economic perspective, however, what is most important is not whether 
agents are perfectly rational but 
``whether optimization is a reasonable way to describe some behavior"~\cite{Var90}.%
\footnote{
	Indeed, several schools of thought in the field of bounded rationality 
	argue that people utilize simple (but often effective) heuristics rather 
	than attempt to optimize; see, for example, \cite{GS01}.
} 
It is then important to study the consequences of approximately rational behaviour, 
see, for example, Akerlof and Yellen~\cite{AY85}. 
First, though, is it possible to quantify the degree to which agents are rational?
Gross~\cite{Gro95} examines assorted methods to test the degree of rationality. 
Notable amongst them is the {\em Afriat Efficiency Index}~\cite{Afr67,Var90}. 
Here the condition required to imply a preference is strengthened multiplicatively. 
Specifically, $\bx_t\succeq \by$ only if 
$\bp_t\cdot\by \le \lambda\cdot \bp_t\cdot\bx_t$ where $\lambda<1$. 
We examine this index with respect to the bidding graph in Section~\ref{sec:other}.
For combinatorial auctions, a variant of this constraint was examined 
experimentally by Harsha et al.~\cite{HBP10}.

Here we show how to quantify exactly the degree of rationality present in the 
data via a parameter of the bidding graph.
Moreover, we are able to go beyond multiplicative guarantees and obtain 
stronger additive bounds. 
To wit, we say that $\hat v$ is an 
{\em $\epsilon$-approximate virtual valuation function} if, 
for all $t$ and for any bundle~$\by$, 
\begin{equation*}
	\hat v(\bx_t) -\bp_t\cdot\bx_t 
		\, \geq \, 
		\hat v(\by)- \bp_t\cdot \by -\epsilon
		\enspace .
\end{equation*} 
Note that if $\epsilon=0$, then the bidder is optimizing with respect to a virtual valuation function, $i.e.$ is rational. 
We remark that the term {\em virtual} reflects the fact that $\hat v$ need 
not be the real valuation function (if one exists) of the bidder, 
but if it is then the bidding is termed {\em truthful}.

\subsection{Minimum Mean Cycles and Approximate Virtual Valuations}
We now examine exactly when a bidding strategy is approximately rational. 
It turns out that the key to understanding approximate deviations from 
rationality is the \textit{minimum mean cycle} in the bidding graph. 
Given a cycle $C$ in $G$, its mean length is 
\begin{equation*}
	\mu(C)=\frac{\sum_{a\in C} \ell_a}{|C|} \enspace.
\end{equation*} 
We denote by $\mu(G)= \min_{C}\, \mu(C)$ the {\em minimum mean length} of a cycle 
in $G$, and we say that $C^*$ is a {\em minimum mean cycle} 
if ${C^*\in\mathrm{argmin}_{C}\, \mu(C)}$. 
We can find a minimum mean cycle in polynomial time using the classical techniques of Karp~\cite{Kar78}. 
\begin{theorem}\label{thm:mu}
	An $\epsilon$-approximate valuation function which (approximately) 
	rationalises bidding behaviour exists if and only if the bidding graph 
	has minimum mean cycle $\mu(G)\geq -\epsilon$.
\end{theorem}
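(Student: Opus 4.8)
The plan is to recognise that the defining inequality for an $\epsilon$-approximate virtual valuation function is, after rearrangement, precisely a \emph{feasible potential} condition on the arcs of the bidding graph, and then to invoke the classical characterization that feasible potentials exist if and only if there is no negative cycle. First I would reformulate the condition. Rearranging $\hat v(\bx_t)-\bp_t\cdot\bx_t \geq \hat v(\by)-\bp_t\cdot\by-\epsilon$ gives $\hat v(\by)-\hat v(\bx_t) \leq \bp_t\cdot(\by-\bx_t)+\epsilon$. Since this must hold for \emph{every} time period at which the bundle $\bx_t$ is bid, the binding constraint is obtained by minimising the right-hand side over all such periods, which by definition of the arc length is $\ell_{\bx_t,\by}+\epsilon$. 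Hence $\hat v$ is an $\epsilon$-approximate virtual valuation function if and only if, for every arc $(\bx_t,\by)$ of the simple bidding graph,
\[
	\hat v(\by)-\hat v(\bx_t) \,\leq\, \ell_{\bx_t,\by}+\epsilon \enspace.
\]

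For the forward direction I would assume such a $\hat v$ exists and sum this reformulated inequality around an arbitrary cycle $C=(v_0\to v_1\to\cdots\to v_{k-1}\to v_0)$. Each potential value $\hat v(v_j)$ appears once with a plus sign and once with a minus sign, so the telescoping sum of valuation differences vanishes, leaving $\sum_{a\in C}\ell_a \geq -|C|\,\epsilon$, that is $\mu(C)\geq -\epsilon$. As $C$ was arbitrary, this yields $\mu(G)\geq -\epsilon$.

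For the reverse direction I would assume $\mu(G)\geq -\epsilon$ and pass to the shifted graph $G'$ in which each arc length becomes $\ell'_a=\ell_a+\epsilon$. Every cycle of $G'$ then has total length $\sum_{a\in C}\ell_a+|C|\,\epsilon \geq 0$, so $G'$ contains no negative cycle. By the classical theorem on feasible potentials (equivalently, the existence of well-defined shortest-path distances in a graph with no negative cycle), I would adjoin an auxiliary source vertex $r$ with a zero-length arc to every bundle and set $\hat v(\by)$ equal to the shortest-path distance from $r$ to $\by$ in $G'$. These distances are finite precisely because $G'$ has no negative cycle, and the shortest-path relaxation inequality $\hat v(\by)\leq \hat v(\bx_t)+\ell'_{\bx_t,\by}$ is exactly the reformulated feasibility condition above. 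This defines $\hat v$ on all $2^n$ bundles; bundles that are never bid have zero out-degree, so they appear only as arc heads and cannot obstruct the construction.

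The main obstacle is really just the reverse direction's reliance on the negative-cycle characterization of feasible potentials; everything else is the telescoping computation together with the observation that the approximation inequality is itself a potential constraint. The only point requiring care is confirming that the shortest-path potential is well defined on the full, exponential-size vertex set, which follows immediately from the zero out-degree of the non-bid bundles.
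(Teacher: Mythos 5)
Your proof is correct and takes essentially the same approach as the paper's: the forward direction is the identical telescoping sum of the potential inequality around a cycle, and the reverse direction is the same construction of shifting arc lengths to kill negative cycles and taking shortest-path distances as the valuation. The only cosmetic difference is that the paper shifts by $-\mu(G)$ rather than by $\epsilon$ (yielding the slightly stronger $(-\mu)$-approximation guarantee), which is immaterial to the statement as given.
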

\begin{proof} 
	From the bidding graph $G$ we create an auxiliary directed graph 
	$\hat{G}=(\hat V, \hat A)$ with vertex set 
	$\hat V = \{\bx_1, \bx_2,\dots, \bx_T\}$. The arc set is complete with arc
	lengths \[
		\hat{\ell}_{\bx_s,\bx_t} = \ell_{\bx_s, \bx_t} - \mu(G) 
		\enspace. 
	\] Observe that, by construction, every cycle in $\hat{G}$ is of 
	non-negative length. 
	It follows that we may obtain shortest path distances $\hat{d}$ from 
	any arbitrary root vertex $r$. 
	Thus, for any arc $(\bx_t, \by)$, we have 
	\begin{align*}
		\hat{d}(\by)
		& \,\leq \, \hat{d}(\bx_t)+ \hat{\ell}_{\bx_t,\by} \\
		& \, = \,   \hat{d}(\bx_t)+ \ell_{\bx_t,\by} - \mu(G) \\
		& \,\leq \, \hat{d}(\bx_t)+ \bp_t\cdot(\by-\bx_t) - \mu(G) 
		\enspace.
	\end{align*} 
	So, if we set $\hat{v}(\bx)=\hat{d}(\bx)$, for each $\bx$, then \[
		\hat{v}(\bx_t)- \bp_t \cdot\bx_t 
		\,\ge\, \hat{v}(\by) - \bp_t \cdot \by +\mu(G)
		\enspace .
	\] for all $t$. 
	Therefore, by definition of $\epsilon$-approximate bidding, 
	we have that $\hat v$ is a $(-\mu)$-approximate virtual valuation function.

	Conversely, let $\hat v$ be an $\epsilon$-approximate virtual valuation 
	function which rationalises the graph, and take some cycle $C$ of minimum 
	mean length in the bidding graph. 
	Suppose for a contradiction that $\mu(C)< -\epsilon$. 
	By \mbox{$\epsilon$-approximability}, we have \[
		\hat v(\bx_s) -\bp_s\cdot\bx_s 
		\, \ge \, 
		\hat v(\bx_t) -\bp_s\cdot\bx_t -\epsilon
		\enspace . 
	\] But $\ell_{\bx_s,\bx_t}\geq\bp_s\cdot (\bx_t-\bx_s)$. 
	Therefore $\ell_{\bx_s\bx_t} \, \geq\, \hat v(\bx_t)-\hat v(\bx_s) -\epsilon$. 	
	Summing over every arc in the cycle we obtain \[
		\ell(C) 
		\, =\,  
		\sum_{(\bx,\by)\in C}\ell_{\bx\by}  
		\, \geq\,  
		\sum_{(\bx,\by)\in C}\left(
			\hat v(\by)-\hat v(\bx) -\epsilon
		\right) 
		\, =\,  
		-|C|\cdot \epsilon
		\enspace .
	\] Thus $\mu(C) \ge -\epsilon$, giving the desired contradiction. \qed
\end{proof}

Recall that, the bidding behaviour is irrational only if $\mu(G)$ is 
strictly negative. 
We emphasize that Theorem \ref{thm:mu} applies even when $\mu(G)$ is positive, 
but in this case, we have an $\epsilon$-approximate virtual valuation function 
where $\epsilon$ is negative! 
What does this mean? 
Well, setting $\delta=-\epsilon$, we then have, for all $t$ and 
for any bundle $\by$, that 
$\hat v(\bx_t) -\bp_t\cdot\bx_t \, \ge\,  \hat v(\by)- \bp_t\cdot \by +\delta$.
Thus, $\bx_t$ is not just the best choice, 
but it provides at least an extra $\delta$ units of utility over any other bundle.
Thus, the larger $\delta$ is, the greater our degree of confidence in the 
revealed preference-ordering and valuation.

\subsection{Individually Rational Virtual Valuation Functions}
Theorem \ref{thm:mu} shows how to obtain a virtual valuation function with the 
best possible additive approximation guarantee: 
any valuation rationalising the bidding graph $G$ must allow for an additive 
approximation of at least $-\mu(G)$. 
However, there is a problem. 
Such a valuation function may not actually be compatible with the data; 
specifically, it may not be individually rational. 
For {\em individual rationality}, we require, for each time $t$, 
that $\hat{v}(\bx_t)-\bp_t\cdot \bx_t \geq 0$. 
But individually rationality is (almost certainly) violated for the the 
root node~$r$ since we have $\hat{v}(\bx_r)=0$. 

It is possible to obtain an individually rational, approximate, 
virtual valuation function simply by taking the $\hat v$ from Theorem~\ref{thm:mu}  
and adding a huge constant to value of each package. 
This operation, of course, is entirely unnatural and the resulting 
valuation function is of little practical value.

\subsubsection{The Minimum Individually Rational Virtual Valuation Function.} \label{sub:min}
We say that $v()$ is the {\em minimum individually rational, 
$\epsilon$-approximate virtual valuation function} if 
$v(\bx_t)\le \omega(\bx_t)$ for each $1\le t\le T$, for any other individually 
rational, $\epsilon$-approximate virtual valuation function $\omega()$. 
This leads to the questions: 
(i) Does such a valuation function exist? and 
(ii) Can it be obtained efficiently? 
The answer to both these questions is $yes$.
\begin{theorem}\label{thm:min-mu}
	The minimum individually rational, $\mu$-approximate virtual 
	valuation function exists and can be found in polynomial time.
\end{theorem}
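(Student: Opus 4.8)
The plan is to observe that, for the optimal guarantee $\epsilon=-\mu(G)$, the individually rational $\epsilon$-approximate virtual valuation functions are precisely the feasible solutions of a system of difference constraints, and then to write down its component-wise minimum explicitly using all-pairs shortest paths in the auxiliary graph $\hat G$ from the proof of Theorem~\ref{thm:mu}. Writing $b_t=\bp_t\cdot\bx_t$, I would first translate the requirements into constraints on the observed values $v(\bx_1),\dots,v(\bx_T)$: individual rationality is the lower bound $v(\bx_t)\ge b_t$, while the approximation inequality applied with $\by=\bx_s$, together with the definition of $\ell_{\bx_t,\bx_s}$, is equivalent to $v(\bx_s)-v(\bx_t)\le \ell_{\bx_t,\bx_s}+\epsilon=\hat\ell_{\bx_t,\bx_s}$, exactly the potential constraint in $\hat G$. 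An unobserved bundle $\by$ imposes only $v(\by)\le\min_t\{v(\bx_t)+\bp_t\cdot(\by-\bx_t)+\epsilon\}$, which is met by taking $v(\by)$ small and never restricts the observed values, so it suffices to fix the latter. Feasibility of the system is inherited from Theorem~\ref{thm:mu}, since adding a large constant to its valuation preserves all differences while forcing every bound $v(\bx_t)\ge b_t$; and because $\hat G$ has a complete arc set and no negative cycle, the distances $\hat d(\bx_t,\bx_u)$ are finite and well-defined.

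Next I would propose the candidate
\[
	v(\bx_t)=\max_{1\le u\le T}\bigl(b_u-\hat d(\bx_t,\bx_u)\bigr)
\]
and establish minimality first. Chaining the difference constraints of any feasible $\omega$ along a shortest $\bx_t$-to-$\bx_u$ path gives $\omega(\bx_t)\ge\omega(\bx_u)-\hat d(\bx_t,\bx_u)$, and $\omega(\bx_u)\ge b_u$ then yields $\omega(\bx_t)\ge b_u-\hat d(\bx_t,\bx_u)$ for every $u$; maximising over $u$ shows $\omega(\bx_t)\ge v(\bx_t)$, so $v$ is dominated by every feasible valuation. Taking $u=t$ (with $\hat d(\bx_t,\bx_t)=0$) shows in particular that $v$ itself satisfies $v(\bx_t)\ge b_t$, i.e. $v$ is individually rational.

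It then remains to check that $v$ is feasible, namely that $v(\bx_s)-v(\bx_t)\le\hat\ell_{\bx_t,\bx_s}$ for each arc $(\bx_t,\bx_s)$. Evaluating $v(\bx_s)$ at the index $u^*$ attaining its maximum, this reduces to the triangle inequality $\hat d(\bx_t,\bx_{u^*})\le\hat\ell_{\bx_t,\bx_s}+\hat d(\bx_s,\bx_{u^*})$ in $\hat G$. I expect this to be the crux of the argument, and it is precisely where the hypothesis $\mu(G)\ge-\epsilon$ (no negative cycle) is needed, since it guarantees that the $\hat d$ are genuine shortest-path distances obeying the triangle inequality. Finally, all pairwise distances $\hat d(\bx_t,\bx_u)$ can be computed in polynomial time — for instance in $O(T^3)$ by Floyd--Warshall, or by Bellman--Ford from each source as $\hat G$ has no negative cycle — so $v$ is constructed efficiently, establishing both its existence and polynomial-time computability.
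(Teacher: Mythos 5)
Your proposal is correct and is essentially the paper's proof in disguise: the paper adds a sink $\bz$ to $\hat G$ with arcs $(\bx_t,\bz)$ of length $-\bp_t\cdot\bx_t$ and sets $v(\bx_t)=-\hat d(\bx_t\to\bz)$, which is exactly your formula $v(\bx_t)=\max_u\bigl(b_u-\hat d(\bx_t,\bx_u)\bigr)$. Your three verification steps (individual rationality via the $u=t$ term, feasibility via the triangle inequality for shortest-path distances, and minimality by chaining the difference constraints along shortest paths) correspond one-to-one to the paper's arguments, with the shortest-path-tree induction replaced by direct path-chaining.
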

\begin{proof}
	We create an auxiliary directed graph $H$ from $\hat G$ by adding a sink 
	vertex $\bz$.
	We add an arc $(\bx_t, \bz)$ of length $-\bp_t\cdot \bx_t$, 
	for each $1\le t\le T$, allowing for repeated arcs. 
	Because $\hat{G}$ contains no negative cycle, 
	neither does $H$. 
	Therefore, there exist shortest path distances in $H$. 
	Denote by $\hat{d}()$ the shortest path distance from vertex $\bx_t$ 
	to $\bz$ in $H$. 
	We claim that setting $v(\bx_t)=-\hat{d}(\bx_t)$ gives the minimum 
	individually rational, $\mu$-approximate virtual valuation function. 

	To begin, let's verify that $v()$ is an individually rational, 
	$\mu$-approximate virtual valuation function. 
	First, we require that $v()$ is individually rational. 
	Now the direct path consisting of the arc $(\bx_t, \bz)$ is at least as 
	long as the shortest path from $\bx_t$ to $\bz$. 
	Thus, $- \bp_{t}\cdot\bx_{t} \geq \hat{d}(\bx_t) $. 
	Individual rationality then follows as 
	$v(\bx_t)= -\hat{d}(\bx_t) \geq \bp_{t}\cdot\bx_{t}$. 
	
	Second we need to show that $v()$ is $\mu$-approximate. 
	Consider a pair $\{ \bx_s, \bx_t\}$. 
	The shortest path conditions imply that \[
		-v(\bx_s) 
		\, = \, 
		\hat d(\bx_s)
		\, \le \, 
		\hat{\ell}_{st} + \hat{d}(\bx_t)
		\, = \,  
		(\ell_{st}-\mu) + \hat{d}(\bx_t)
		\, = \, 
		(\ell_{st}-\mu) -v(\bx_t)
		\enspace .
	\] Here the inequality follows from the shortest path conditions on 
	$\hat d()$. 
	Therefore, by definition of $\ell_{st}$, 
	\begin{align*}
		v(\bx_t) 
		&\,\leq\, v(\bx_s) + \ell_{st}-\mu \\
		&\,=\,    v(\bx_s) + \min_{s'} \,\{ 
			\bp_{s'}\cdot (\bx_t - \bx_s): \bx_{s'}=\bx_s
		\} -\mu \\
		&\,\leq\, v(\bx_s) + \bp_{s}\cdot(\bx_{t}-\bx_s) -\mu
		\enspace .
	\end{align*}
	Hence, $v()$ is $\mu$-approximate as desired.

	Finally we require that $v()$ is minimum individually rational. 
	So, take any other individually rational, 
	$\mu$-approximate virtual valuation $\omega()$. 
	We must show that $v(\bx_t)\le \omega(\bx_t)$ for every bundle $\bx_t$. 
	Now consider the shortest path tree $T$ in $H$ corresponding to $\hat d()$. 
	If $(\bx_t, \bz)$ is an arc in $T$ (and at least one such arc exists) then 
	$- \bp_{t}\cdot\bx_{t} = \hat{d}(\bx_t) $. Thus \[
		v(\bx_{t}) - \bp_{t}\cdot\bx_{t}
		\, = \, 
		(- \bp_{t}\cdot\bx_{t})-\hat{d}(\bx_{t})
		\, = \,	0 
		\, \le \,   
		\omega(\bx_{t}) - \bp_{t}\cdot\bx_{t}
		\enspace .
	\] Here the inequality follows by the individual rationality of $\omega()$. 
	Thus ${v(\bx_{t})\, \leq\, \omega(\bx_{t})}$. 
	Now suppose that $v(\bx_s) > \omega(\bx_s)$ for some $\bx_s$. 
	We may take $\bx_s$ to be the closest vertex to the root $\bz$ in $T$ with 
	this property. 
	We have seen that $\bx_s$ cannot be a child of $\bz$. 
	So let $(\bx_s, \bx_t)$ be an arc in $T$. 
	As $\bx_t$ is closer to the root than $\bx_s$, we know 
	$v(\bx_t) \le \omega(\bx_t)$. 
	Then, as $T$ is a shortest path tree, 
	we have $\hat d(\bx_s) = \hat{\ell}_{st} + \hat{d}(\bx_t)$. 
	Consequently $-v(\bx_s) = \hat{\ell}_{st} -v(\bx_t)$, 
	and so 
	\[
		\omega(\bx_t)
		\, \geq \, 
		v(\bx_t) 
		\, = \, 
		\hat{\ell}_{st} +  v(\bx_s)
		\, > \, \hat{\ell}_{st} +  \omega(\bx_s)
		\enspace .
	\] But then \[ 
		\omega(\bx_t) 
		\, > \, 
		\omega(\bx_s) +  \ell_{st} -\mu 
		\, = \,
		\omega(\bx_s) + \min_{s'} \,\{
			\bp_{s'}\cdot (\bx_t - \bx_s): \bx_{s'}=\bx_s
		\} -\mu 
		\enspace .
	\] 
	
	It follows that there is at least one time period when $\bx_s$ was selected 
	in violation of the $\mu$-optimality of $\omega()$. 
	So $v()$ is a minimum individually rational, 
	$\mu$-approximate virtual valuation function. \qed
\end{proof}

\subsubsection{The Maximum (Individually Rational) Virtual Valuation Function.}\label{sub:max}
The minimum individually rational virtual valuation function allows 
us to obtain worst-case social welfare guarantees when revealed preference 
is used in mechanism design, see Section~\ref{sec:mech-design}. 
For the best-case welfare guarantees, we are interested in finding the 
{\em maximum} virtual valuation function. 
In general, this need not exist as we may add an arbitrary constant to 
each bundle's valuation given by the minimum individually rational virtual 
valuation function. 
But, it does exist provided we have an upper bound on the valuation of at 
least one bundle. 
This is often the case. 
For example in a combinatorial auction if a bidder drops out of the auction 
at time $t+1$, then $\bp_{t+1}\cdot \bx_t$ is an upper bound on the value of 
bundle $\bx_t$. 
Furthermore, in practice, bidders (and the auctioneer) often have 
(over)-estimates of the maximum possible value of some bundles. 

So suppose we are given a set $I$ and constraints of the form 
$v(\bx_i)\leq \beta_i$ for each $i\in I$. 
Then there is a {\em unique} maximum $\mu$-approximate virtual valuation function.

\begin{theorem}\label{thm:max-mu}
Given a set of constraints, the maximum $\mu$-approximate virtual valuation 
function exists and can be found in polynomial time. 
\end{theorem}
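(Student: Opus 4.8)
The plan is to recognise that a $\mu$-approximate virtual valuation function is nothing more than a feasible solution of a system of difference constraints on the vertices of $\hat G$, and then to solve the maximisation by a single-source shortest path computation that mirrors the sink-based construction of Theorem~\ref{thm:min-mu}. First I would rewrite the approximate-optimality condition exactly as in the proofs of Theorems~\ref{thm:mu} and~\ref{thm:min-mu}: a function $v$ on $\{\bx_1,\dots,\bx_T\}$ is a $\mu$-approximate virtual valuation function if and only if $v(\bx_t)-v(\bx_s)\le \hat\ell_{st}$ for every arc $(\bx_s,\bx_t)$ of $\hat G$, where $\hat\ell_{st}=\ell_{st}-\mu$. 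Together with the imposed constraints $v(\bx_i)\le\beta_i$ for $i\in I$, this is a bounded system of difference constraints. A key structural remark is that this feasible set is closed under component-wise maximum: if $v,w$ are feasible then so is $\max(v,w)$. Hence the pointwise supremum of all feasible valuations, when finite, is itself feasible and is therefore the unique maximum.

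For the construction I would build an auxiliary graph $H$ from $\hat G$ by adjoining a source vertex $\bz$ together with an arc $(\bz,\bx_i)$ of length $\beta_i$ for each $i\in I$. Since $\bz$ has in-degree zero and $\hat G$ contains no negative cycle, $H$ contains no negative cycle either, so shortest path distances from $\bz$ are well defined. Moreover, because the arc set of $\hat G$ is complete, every $\bx_t$ is reachable from $\bz$ as soon as $I\neq\emptyset$, which guarantees these distances are finite and hence that the maximum exists (this is precisely the role of the ``at least one upper bound'' hypothesis). I would then set $v(\bx_t)=\hat d(\bx_t)$, the shortest path distance from $\bz$ to $\bx_t$ in $H$.

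Verifying correctness splits into the usual two directions. Feasibility is immediate from the shortest path (triangle) inequalities: $\hat d(\bx_t)\le \hat d(\bx_s)+\hat\ell_{st}$ gives the difference constraints, while the single arc $(\bz,\bx_i)$ yields $\hat d(\bx_i)\le\beta_i$, so $v$ is a genuine $\mu$-approximate virtual valuation function respecting the bounds. For maximality, take any feasible $\omega$ and any bundle $\bx_t$, and walk along a shortest path from $\bz$ to $\bx_t$, say $\bz\to\bx_i\to\cdots\to\bx_t$; the first arc contributes $\beta_i\ge\omega(\bx_i)$ by the upper-bound constraint, and each subsequent arc $(\bx_a,\bx_b)$ contributes $\hat\ell_{ab}\ge\omega(\bx_b)-\omega(\bx_a)$ by the difference constraints, so telescoping along the path gives $v(\bx_t)=\hat d(\bx_t)\ge\omega(\bx_t)$. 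Thus $v$ dominates $\omega$ component-wise. Both steps are polynomial: a single Bellman--Ford computation on the $O(T)$-vertex graph $H$ runs in $O(T^3)$ time.

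I expect the only genuinely delicate point to be orienting the reduction correctly, namely recognising that shortest paths \emph{from} a source — rather than \emph{to} a sink as in Theorem~\ref{thm:min-mu} — produce the component-wise \emph{largest} potential pinned by the source, which is exactly what converts the earlier minimisation into a maximisation; the remaining verifications are routine telescoping arguments. If one additionally insists on individual rationality, the lower bounds $v(\bx_t)\ge\bp_t\cdot\bx_t$ can be folded in via the sink arcs of Theorem~\ref{thm:min-mu}, and feasibility of the resulting box-constrained system reduces, as usual, to the absence of negative cycles in the augmented graph.
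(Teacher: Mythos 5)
Your proposal is correct and follows essentially the same route as the paper's proof: adjoin a source $\bz$ with arcs of length $\beta_i$ to each constrained bundle, take $v(\bx_t)=\hat d(\bx_t)$ as the shortest-path distance from $\bz$, and establish maximality by telescoping the difference constraints along a shortest path. Your added remarks on closure of the feasible set under component-wise maximum and on finiteness of the distances when $I\neq\emptyset$ are sound refinements but do not change the argument.
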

\begin{proof}
	Let $v(\bx_i)\leq \beta_i$ for each $i\in I$. 
	We construct a graph $H$ from $\hat G$ by adding a source vertex $\bz$ 
	with arcs of length $\beta_i$ from $\bz$ to $\bx_i$, for each $i\in I$. 
	Since $\bz$ has in-degree zero, $H$ has no negative cycles because 
	$\hat{G}$ does not. 
	Denote by $\hat d()$ the shortest distance of every vertex {\em from} $\bz$. 
	We claim that setting $v(\bx) = \hat d(\bx)$ gives us the desired maximum 
	$\mu$-approximate valuation function.
		
	To prove this, we first begin by checking that it satisfies the 
	upper-bound constraints. 
	This is trivial, because for each $i\in I$ there is a path consisting of 
	one arc 	of length $\beta_i$ from $\bz$ to $\bx_i$. 
	Thus the shortest path to $\bx_i$ has length at most $\beta_i$.
	Second, the valuation function $v() = \hat d()$ is $\mu$-approximate by 
	the 	choice of arc length in $\hat{G}$.
	Third, we show that this valuation function is maximum. 
	So, take any other $\mu$-approximate virtual valuation $\omega()$ that 
	satisfies the upper bound constraints $I$. 
	We must show that $v(\bx_t)\ge \omega(\bx_t)$ for every bundle $\bx_t$.
	For a contradiction, suppose that $P=\{\bz, \by_1, \by_2, \dots,  \by_r\}$ 
	is the shortest path from $\bz$ to $\by_r$ in $H$ and that 
	$v(\by_r)< \omega(\by_r)$. 
	Observe that the node adjacent to $\bz$ on $P$ must be $\by_1=\bx_i$ for 
	some $i\in I$.
	Now because $\omega()$ is a $\mu$-approximate valuation function, we have
	\begin{equation*}
		\sum_{j=1}^{r-1} \omega(\by_{j+1}) 
		\, \leq\,  
		\sum_{j=1}^{r-1} \left(
			\omega(\by_j) + \ell_{\by_j,\by_{j+1}}-\mu
		\right) 
		\, =\,
		\sum_{j=1}^{r-1} \left(
			\omega(\by_j) + \hat{\ell}_{\by_j,\by_{j+1}}
		\right)
		\enspace .
	\end{equation*}
	Cancelling terms produces
	\begin{equation*}
	\omega(\by_{r}) 
		\, \leq\,  
		\omega(\by_1) +\sum_{j=1}^{r-1} \hat{\ell}_{\by_j,\by_{j+1}}
		\, \leq\,  
		\beta_j +\sum_{j=1}^{r-1} \hat{\ell}_{\by_j,\by_{j+1}}
		\, =\,  
		\hat{d}(\by_r)
		\, =\, 
		v(\by_r)
		\enspace .
	\end{equation*}
	Here the second inequality follows by the facts that 
	$\by_1=\bx_i$, for some $i\in I$, and $\omega()$ satisfies the 
	upper bound constraint $\omega(\bx_i) \le \beta_i$.
	This contradicts the assumption that $v(\by_r)< \omega(\by_r)$.
	\qed
\end{proof}

Notice that Theorem \ref{thm:max-mu} does not guarantee that the maximum 
virtual valuation function is individually rational. 
For example, suppose $\beta_t=\bp_t\cdot \bx_t$, for all $1\le t\le T$. 
Individual rationality then implies that $v(\bx_t)$ must equal 
$\bp_t\cdot \bx_t$ for every bundle.
In general, however, such a valuation function is not $\mu$-approximate. 
In such cases no individually rational $\mu$-approximate virtual valuation 
functions may exist that satisfy the upper bound constraints. 
On the other hand, suppose such a virtual valuation function does exist. 
Then the maximum $\mu$-approximate virtual valuation function in 
Theorem~\ref{thm:max-mu} must be individually rational by maximality.

\section{Revealed Preference Auction Bidding Rules}\label{sec:mech-design}
So far, we have focused upon how to test the degree of rationality reflected 
in a data set. 
Specifically, we saw in Theorem \ref{thm:mu} that the minimum mean length 
of a cycle, $\mu(G)$, gives an exact and optimal goodness of fit measure 
for rationality. 
Furthermore, Theorem \ref{thm:min-mu} explained how to quickly obtain the 
minimum individually rational valuation function that best fits the data.

Recall, however, that revealed preference is also used as tool in mechanism design.
In particular, we saw in Section~\ref{sec:auction-rp} how revealed preference 
is used to impose bidding constraints in combinatorial auctions. 
We will now show how to apply the combinatorial arguments we have developed 
to create other relaxed revealed preference constraints.

\subsection{Relaxed Revealed Preference Bidding Rules}
Consider a combinatorial auction at time (round) $t$ where our prior 
price-bundle bidding pairs are 
$\{(\bp_1, \bx_1), (\bp_2, \bx_2),  \dots, (\bp_{t-1}, \bx_{t-1})\}$. 
By Inequality (\ref{eq:RP-auction}) in section \ref{sec:auction-rp}, 
rational bidding at time $t$ implies that \[
	v(\bx_t) -\bp_t\cdot\bx_t 
	\, \geq \, 
	v(\bx_s) -\bp_t\cdot\bx_s,
	\quad \text{for all $s< t$.}
\] Moreover, a necessary condition is then that 
${(\bp_t-\bp_s)\cdot\bx_s \,\geq \, (\bp_t-\bp_s)\cdot\bx_t}$ and this can 
easily be checked by searching for negative length digons in the 
bidding graph induced by the first $t$ bids. 
If such a cycle is found then the bid $(\bp_t, \bx_t)$ is not permitted 
by the auction mechanism.

The non-permittal of bids is clearly an extreme measure, and one that can 
lead to the exclusion of bidders from the auction even when they still have 
bids they wish to make. 
In this respect, it may be desirable for the mechanism to use a relaxed set 
of revealed preference bidding rules. 
The natural approach is to insist not upon strictly rational bidders but rather 
just upon approximately rational bidders. 
Specifically, the auction mechanism may (dynamically) select a desired 
degree $\epsilon$ of rationality. 
This requires that at time $t$, \[
	v(\bx_t) -\bp_t\cdot\bx_t 
	\, \geq \, 
	v(\bx_s) -\bp_t\cdot\bx_s -\epsilon,
	\quad \text{for all $s< t$.} 
\] A necessary condition then is 
$
	(\bp_t-\bp_s)\cdot\bx_s  
	\, \ge\, 
	(\bp_t-\bp_s)\cdot\bx_t -2\epsilon
$, and we can test this {\em relaxed} {\sc warp}-based bidding rule by 
insisting that every digon has mean length at least $-\epsilon$. 
Similarly, the {\em relaxed} {\sc karp}-based bidding rule is 
\begin{equation}\label{eq:relaxed-karp}
	(\bp_k-\bp_0)\cdot\bx_0 
	\, \ge\,  
	\sum_{i=1}^k  (\bp_i-\bp_{i-1})\cdot\bx_i - (k+1)\cdot \epsilon
\end{equation}

The {\em relaxed} {\sc garp}-based bidding rule applies the relaxed 
{\sc karp}-based bidding rule for every choice of $k$. 
The imposition of the relaxed {\sc garp}-based bidding rule ensures 
approximate rationality.

\begin{theorem}\label{thm:relaxed-garp}
	A set of price-bid pairings $\{(\bp_t, \bx_t): 1\le t\le T\}$ 
	has a corresponding $\epsilon$-approximate individually rational 
	virtual valuation function if and only if it satisfies the relaxed 
	{\sc garp}-based bidding rule. 
\end{theorem}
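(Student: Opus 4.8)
The plan is to reduce the statement to the minimum-mean-cycle characterization already established, by showing that the relaxed {\sc garp}-based bidding rule is \emph{exactly} the condition $\mu(G)\ge -\epsilon$ on the bidding graph. The bridge is a telescoping identity: for any cyclic sequence of observations $(\bp_0,\bx_0),\dots,(\bp_k,\bx_k)$ with indices read modulo $k+1$ (so $\bx_{k+1}=\bx_0$), one checks that $\sum_{i=0}^{k}\bp_i\cdot(\bx_{i+1}-\bx_i)$ equals $(\bp_k-\bp_0)\cdot\bx_0-\sum_{i=1}^{k}(\bp_i-\bp_{i-1})\cdot\bx_i$. Hence the relaxed {\sc karp} rule~(\ref{eq:relaxed-karp}) is precisely the assertion that $\sum_{i=0}^{k}\bp_i\cdot(\bx_{i+1}-\bx_i)\ge -(k+1)\epsilon$.

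First I would establish the equivalence ``relaxed {\sc garp} $\iff \mu(G)\ge -\epsilon$''. For one direction, suppose $\mu(G)\ge -\epsilon$. Any observation sequence induces a closed walk $\bx_0\to\cdots\to\bx_k\to\bx_0$ in the bidding graph whose length satisfies $\ell(C)\le \sum_{i}\bp_i\cdot(\bx_{i+1}-\bx_i)$, since each arc length is a minimum over realizing times and $\bp_i$ is one such realization. As $\mu(G)\ge -\epsilon$ forces $\ell(C)\ge -(k+1)\epsilon$, the telescoped quantity is at least $-(k+1)\epsilon$, which is exactly the relaxed {\sc karp} rule; ranging over all $k$ gives relaxed {\sc garp}. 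Conversely, if relaxed {\sc garp} holds, then for any cycle $C$ I select, for each arc, a time attaining the minimum arc length, feed the resulting price sequence into the relaxed {\sc karp} rule, and obtain $\ell(C)\ge -(k+1)\epsilon$, i.e. $\mu(C)\ge -\epsilon$; minimizing over $C$ yields $\mu(G)\ge -\epsilon$.

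With this equivalence in hand the theorem follows from the earlier results. For the forward implication, if an $\epsilon$-approximate individually rational virtual valuation function exists, then Theorem~\ref{thm:mu} gives $\mu(G)\ge -\epsilon$, so relaxed {\sc garp} holds. For the reverse implication, relaxed {\sc garp} yields $\mu(G)\ge -\epsilon$, whence Theorem~\ref{thm:mu} produces a $(-\mu(G))$-approximate virtual valuation function and Theorem~\ref{thm:min-mu} produces the \emph{minimum individually rational} one carrying the same guarantee. Since $-\mu(G)\le \epsilon$, any $(-\mu(G))$-approximate function satisfies the weaker $\epsilon$-approximation inequalities as well, so this function is an $\epsilon$-approximate individually rational virtual valuation function, as required.

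The main obstacle I anticipate is purely bookkeeping inside the equivalence step: verifying the telescoping identity with indices taken modulo $k+1$, and handling the minimum in the arc-length definition so that the price sequence plugged into the relaxed {\sc karp} rule genuinely realizes the arc lengths of the chosen cycle (and that restricting to simple cycles loses nothing, since the minimum mean is attained on one). Once the identity is pinned down, the monotonicity observation that the sharper guarantee $-\mu(G)\le\epsilon$ implies the coarser $\epsilon$-guarantee is immediate, and no further analysis is needed.
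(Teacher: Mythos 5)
Your proposal is correct and takes essentially the same route as the paper: both hinge on the telescoping identity that converts the relaxed {\sc karp}/{\sc garp} rule into the condition $\mu(G)\ge-\epsilon$ (choosing, for each arc of a cycle, a time realizing the minimum in the arc-length definition), and then invoke Theorems~\ref{thm:mu} and~\ref{thm:min-mu} to produce the individually rational $\epsilon$-approximate valuation. The paper only writes out the direction from the bidding rule to $\mu(G)\ge-\epsilon$ and asserts the converse in one line, so your version is, if anything, slightly more explicit.
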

\begin{proof}
	Suppose the relaxed {\sc garp}-based bidding rule is satisfied.
	By Theorem~\ref{thm:mu}, it suffices to show that the minimum mean 
	cycle in the bidding graph with arc lengths $\ell$ is at least $-\epsilon$.
	So take any collection $\{\bx_i\}_{i=1}^k$ of bundles. 
	Let $t_i$ be the time when $\ell_{\bx_i,\bx_{i+1}}$ was minimized, 
	and let $\bp_i:=\bp_{t_i}$.
	Then we have 
	\begin{align*}
		-(k+1)\cdot \epsilon 
	   	&\, \le\,  
		(\bp_k-\bp_0)\cdot\bx_0 - 
			\sum\nolimits_{i=1}^k  (\bp_i-\bp_{i-1})\cdot\bx_i\\
		&\, =\,  
		\sum\nolimits_{i=0}^k  \bp_i\cdot (\bx_{i+1}-\bx_{i}) \\
		&\, =\,  
	    \sum\nolimits_{i=0}^k  \ell_{\bx_i,\bx_{i+1}}
     \end{align*}
	Here, the inequality follows because the relaxed {\sc garp}-based 
	bidding rule is satisfied. 
	(Again the subscripts are taken modulo $k+1$.) 
	Since, the corresponding cycle contains $k+1$ arcs,
	we see that the length of the minimum mean cycle is at least $-\epsilon$.

	Conversely, if the bidding data has a corresponding 
	$\epsilon$-approximate individually rational virtual valuation function 
	then the relaxed bidding rules are satisfied.
	\qed
\end{proof}
 
\subsection{Relaxed KARP-Based Bidding Rules}
Theorem \ref{thm:relaxed-garp} tells us that imposing the relaxed 
{\sc garp}-based bidding rule ensures approximate rationality. 
But, in practice, even {\sc warp}-based bidding rules are often confusing to 
real bidders. 
There is likely therefore to be some resistance to the idea of imposing the 
whole gamut of {\sc garp}-based bidding rules. 
We believe that this combinatorial view of revealed preference, 
where the bidding rules can be tested via cycle examination, 
will eradicate some of the confusion. 
However, for simplicity, there is some worth in quantitatively examining 
the consequences of imposing a weaker relaxed {\sc karp}-based bidding rule 
rather than the {\sc garp}-based bidding rule. 
To test for the relaxed {\sc karp}-based bidding rules, we simply have to 
examine cycles of length at most $k+1$. 
Now suppose the {\sc karp}-based bidding rules are satisfied.
By finding the $\mu(G)$ in the bidding graph we can still obtain the 
best-fit additive approximation guarantee, 
but we no longer have that this guarantee is $\epsilon$. 
We can still, though, prove a strong additive approximation guarantee even 
for small values of $k$. To do this we need the following result. 
\begin{theorem}\label{thm:kcycle}
	Given a complete directed graph $G$ with arc lengths $\ell$. 
	If every cycle of cardinality at most $k+1$ has non-negative length 
	then the minimum mean length of a cycle is at least $-\frac{\ell^{\max}}{k}$,
	where $\ell^{\max}=\max_{e\in E(G)} |\ell_e|$.
\end{theorem}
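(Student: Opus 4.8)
The plan is to reduce the claim to the hypothesis by covering an arbitrarily long cycle with a family of short (hence non-negative) cycles via a sliding-window construction, using the completeness of $G$ to guarantee that all the required chords exist. Write $\ell(C)=\sum_{a\in C}\ell_a$, so that $\mu(C)=\ell(C)/|C|$. It suffices to prove $\ell(C)\ge -\tfrac{|C|}{k}\,\ell^{\max}$ for every cycle $C$: dividing by $|C|$ then yields $\mu(C)\ge -\ell^{\max}/k$, and taking the minimum over all cycles gives the theorem.

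First I would dispose of the easy case. If $|C|\le k+1$, then $C$ itself is a cycle of cardinality at most $k+1$, so by hypothesis $\ell(C)\ge 0\ge -\tfrac{|C|}{k}\,\ell^{\max}$ and there is nothing to prove. The content therefore lies in cycles $C=v_0\to v_1\to\cdots\to v_{m-1}\to v_0$ with $m=|C|\ge k+2$. For such a $C$, I would introduce, for each $i\in\{0,1,\dots,m-1\}$, the short cycle
\[
	D_i \;=\; v_i\to v_{i+1}\to\cdots\to v_{i+k}\to v_i ,
\]
with all indices read modulo $m$. This window traverses $k$ consecutive arcs of $C$ and then closes up with the single chord $(v_{i+k},v_i)$, which is available precisely because $G$ is complete. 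Since $m\ge k+2$, the $k+1$ vertices $v_i,\dots,v_{i+k}$ are distinct, so $D_i$ is a genuine cycle of cardinality $k+1$, and the hypothesis gives $\ell(D_i)\ge 0$.

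The crux is the bookkeeping when these inequalities are summed over $i$. Each original arc $(v_a,v_{a+1})$ lies in exactly $k$ of the windows (namely for $i\in\{a-k+1,\dots,a\}$ modulo $m$), so the original arcs contribute exactly $k\,\ell(C)$; the only other terms are the $m$ chords $(v_{i+k},v_i)$, one per window. Hence
\[
	0\;\le\;\sum_{i=0}^{m-1}\ell(D_i)\;=\;k\,\ell(C)\;+\;\sum_{i=0}^{m-1}\ell_{v_{i+k},v_i} .
\]
As each of the $m$ chord lengths is at most $\ell^{\max}$ in absolute value, the chord sum is at most $m\,\ell^{\max}$, and rearranging gives $k\,\ell(C)\ge -m\,\ell^{\max}$, that is, $\ell(C)\ge -\tfrac{m}{k}\,\ell^{\max}$, as required.

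I expect the main obstacle to be choosing the covering family correctly, rather than any delicate estimate. The sliding window is engineered so that (i) each $D_i$ has cardinality exactly $k+1$ and is thus non-negative, (ii) by rotational symmetry every arc of $C$ is covered the same number of times, making $\sum_i\ell(D_i)$ collapse to a clean multiple of $\ell(C)$, and (iii) only one overhead chord is incurred per window, so the error term is $m\,\ell^{\max}$ and divides out to the stated $\ell^{\max}/k$. Any decomposition that splits $C$ into fewer pieces, for instance by a single chord creating two subcycles, forces one to \emph{subtract} chord lengths and fails to produce the factor $1/k$; it is the amortization over all $m$ rotations that yields the correct dependence on $k$.
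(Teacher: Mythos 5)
Your proof is correct and is essentially the paper's own argument: the paper also amortizes over the $m$ sliding windows of $k$ consecutive arcs, bounding each window's length below by $-\ell^{\max}$ via the closing chord that would otherwise create a negative $(k{+}1)$-cycle. The only difference is cosmetic --- you sum the non-negativity of the closed cycles $D_i$ and bound the chord contributions in aggregate, while the paper applies the chord bound window by window.
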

\begin{proof}
	Take any cycle $C$ with cardinality $|C|>k+1$.
	Let the arcs of $C$ be $\{e_1,\,e_2,\,\dotsc,\, e_{|C|}\}$ in order.
	Then
	\begin{equation}\label{eq:kcycleA}
		\sum_{i=1}^{|C|} \,\sum_{j=i}^{i+k-1} \ell_{e_j} 
		\, =\,  
		k \cdot \sum_{i=1}^{|C|} \ell_{e_i}
		\, =\, 
		k\cdot \ell(C)
		\, =\,  
		k\cdot |C| \cdot \frac{\ell(C)}{|C|}
		\enspace .
	\end{equation}
	Above, the inner summation is taken modulo $|C|$.
	On the other hand take any path segment 
	$P=\{e_i, e_{i+1},\dots, e_{i+k-1}\}$, where again the subscript summation 
	is modulo $|C|$. 
	Because the graph is complete and the maximum arc length is $\ell^{\max}$, 
	the length of $P$ is at least $-\ell^{\max}$. 
	Otherwise, we have a negative length cycle of cardinality $k+1$ by adding 
	to $P$ the arc from the head vertex of $e_{i+k-1}$ to the tail vertex of 
	$e_i$. Thus, 
	\begin{equation}\label{eq:kcycleB}
		\sum_{i=1}^{|C|} \,\sum_{j=i}^{i+k-1} \ell_{e_j} 
		\, \ge\,  
		-|C|\cdot \ell^{\max} \enspace .
	\end{equation}
	Combining Equalities (\ref{eq:kcycleA}) and Inequality (\ref{eq:kcycleB}) 
	gives that ${\frac{\ell(C)}{|C|} \, \ge\, -\frac{\ell^{\max}}{k}}$.
	As every cycle of cardinality at most $k+1$ has non-negative mean length, 
	this implies that the minimum mean length 
	of any cycle in $G$ is at least $-\frac{\ell^{\max}}{k}$.
	\qed
\end{proof}

This result is important as it allows us to bound the degree of 
rationality that must arise whenever we impose the relaxed 
{\sc karp}-based bidding rule.
\begin{corollary}\label{cor:kcycle}
	Given a set of price-bid pairings ${\{(\bp_t, \bx_t): 1\le t\le T\}}$ 
	that satisfy the  relaxed {\sc karp}-based bidding rule,
	there is a ${(\frac{b^{\max}}{k}+\epsilon)}$-approximate individually 
	rational virtual valuation function, 
	where $b^{\max}$ is the maximum bid made by the bidder during the 
	auction. 
\end{corollary}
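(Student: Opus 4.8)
The plan is to reduce the corollary to a lower bound on the minimum mean cycle $\mu(G)$ of the bidding graph and then invoke the results already established. By Theorem~\ref{thm:mu} an $\epsilon'$-approximate rationalising valuation exists precisely when $\mu(G)\ge-\epsilon'$, and by Theorem~\ref{thm:min-mu} it may be taken individually rational with the optimal parameter $-\mu(G)$. Since a valuation that is $(-\mu(G))$-approximate is a fortiori $c$-approximate for every $c\ge-\mu(G)$, it therefore suffices to prove that $\mu(G)\ge-\left(\frac{b^{\max}}{k}+\epsilon\right)$; the individually rational valuation produced by Theorem~\ref{thm:min-mu} is then the desired $(\frac{b^{\max}}{k}+\epsilon)$-approximate one.

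First I would translate the hypothesis into a statement about short cycles. Running the computation in the proof of Theorem~\ref{thm:relaxed-garp} in reverse, the relaxed {\sc karp}-based bidding rule (\ref{eq:relaxed-karp}) is equivalent to the assertion that every cycle $C$ of cardinality at most $k+1$ in the bidding graph satisfies $\ell(C)\ge-|C|\,\epsilon$, i.e.\ has mean length at least $-\epsilon$. I would then pass to the shifted graph $G'$ on the same vertices with arc lengths $\ell'_a=\ell_a+\epsilon$. Adding a constant to every arc shifts every cycle mean by $\epsilon$, so in $G'$ every cycle of cardinality at most $k+1$ has non-negative length, which is exactly the hypothesis of Theorem~\ref{thm:kcycle}. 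That theorem then gives $\mu(G')\ge-\frac{\ell^{\max}}{k}$, and since $\mu(G)=\mu(G')-\epsilon$, it remains only to bound $\ell^{\max}$ by $b^{\max}$.

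The bound on $\ell^{\max}$ is where the concrete auction structure enters, and I expect it to be the main obstacle. For an arc from $\bx_s$ to $\bx_t$ we have $\ell_{\bx_s,\bx_t}=\bp_{s'}\cdot(\bx_t-\bx_s)$ for some time $s'$ with $\bx_{s'}=\bx_s$. Because the bundles are indicator vectors, the difference $\bx_t-\bx_s$ has entries in $\{-1,0,1\}$, and as prices are non-negative this yields $|\ell_{\bx_s,\bx_t}|\le\sum_i p_{s',i}$, the total price of all items. Under the natural reading of $b^{\max}$ as an upper bound on what the bidder is willing to pay, this total is at most $b^{\max}$, so $\ell^{\max}\le b^{\max}$ (the $\epsilon$-shift being harmless and absorbed here).

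Combining the two steps gives $\mu(G)=\mu(G')-\epsilon\ge-\frac{b^{\max}}{k}-\epsilon$, and feeding this into Theorem~\ref{thm:mu} and Theorem~\ref{thm:min-mu} produces an individually rational $(\frac{b^{\max}}{k}+\epsilon)$-approximate virtual valuation function, as claimed. The translation of (\ref{eq:relaxed-karp}) into the short-cycle mean bound is routine given the proof of Theorem~\ref{thm:relaxed-garp}, and the passage to $G'$ is a one-line observation; the only genuinely auction-specific ingredient—and the delicate point to get exactly right—is the estimate $\ell^{\max}\le b^{\max}$, which is what converts the abstract guarantee of Theorem~\ref{thm:kcycle} into a statement about the bidder's maximum bid.
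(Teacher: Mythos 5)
Your proposal follows essentially the same route as the paper's proof: translate the relaxed {\sc karp} rule into a mean-length bound on cycles of cardinality at most $k+1$, shift the arc lengths by $\epsilon$ to meet the hypothesis of Theorem~\ref{thm:kcycle}, bound $\ell^{\max}$ by $b^{\max}$, and invoke Theorems~\ref{thm:mu} and~\ref{thm:min-mu}. If anything, you give slightly more justification for the step $\ell^{\max}\le b^{\max}$ than the paper does, and you share with the paper the same minor looseness in absorbing the extra $\epsilon/k$ from $(\ell')^{\max}=\ell^{\max}+\epsilon$ into the stated constant.
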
 
\begin{proof} 
	The relaxed {\sc karp}-based bidding rule (\ref{eq:relaxed-karp}) 
	implies that every cycle of cardinality at most $k+1$ in the bidding 
	graph $G$ has mean length at least $-\epsilon$.
	Let $G'$ be the modified graph with arc lengths 
	${\ell'_{\bx_s,\bx_t}:=  \ell_{\bx_s,\bx_t}+\epsilon}$.
	Then every cycle in $G'$ of cardinality at most $k+1$ has
	non-negative length. 
	By Theorem \ref{thm:kcycle}, the minimum mean length of a cycle in $G'$ 
	is then at most $\frac{(\ell')^{\max}}{k}$. 
	Furthermore, $(\ell')^{max} = \ell^{max} +\epsilon \le b^{\max}+\epsilon$. 
	Theorems~\ref{thm:mu} and \ref{thm:min-mu} then guarantee the existence 
	of  a $(\frac{b^{\max}}{k}+\epsilon)$-approximate individually 
	rational virtual valuation function. \qed
\end{proof}

One may ask whether the additive approximation guarantee in
Corollary~\ref{cor:kcycle} can be improved. The answer is {\em no}; Theorem~\ref{thm:kcycle} is tight.

\begin{lemma}\label{lem:kcycle-tight}
	There is a graph $G$ where each cycle of cardinality at most $k+1$ 
	has non-negative length and the minimum mean length of a cycle is 
	$-\ell^{\max}/k$.
\end{lemma}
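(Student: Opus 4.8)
The plan is to exhibit an explicit complete directed graph that saturates the bound of Theorem~\ref{thm:kcycle}. Writing $L=\ell^{\max}$, I would take $N=k+2$ vertices $\{v_0,v_1,\dots,v_{N-1}\}$ and fix the Hamiltonian ``forward'' cycle $v_0\to v_1\to\cdots\to v_{N-1}\to v_0$ (indices modulo $N$). Every forward arc $(v_i,v_{i+1})$ receives length $-L/k$, and every remaining arc receives length $+L$. Since $k\ge 1$ we have $L/k\le L$, so indeed $\ell^{\max}=L$, and the forward cycle consists of $N=k+2$ arcs of total length $-(k+2)L/k$, hence has mean length exactly $-L/k=-\ell^{\max}/k$. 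This is the candidate minimum mean cycle.

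Next I would verify the short-cycle hypothesis. The only cycle using forward arcs exclusively is the full forward cycle, since from any vertex the forward arcs force a single traversal of all $N$ vertices before returning; thus any \emph{other} cycle uses at least one arc of length $+L$. Classify an arbitrary cycle by its number $f$ of forward arcs and its number $o$ of other arcs, so its cardinality is $f+o$ and its length is $-fL/k+oL=L(o-f/k)$. If $o\ge 1$, then negativity requires $o-f/k<0$, i.e. $f>ok\ge k$, which forces $f\ge k+1$ and cardinality $f+o\ge k+2$. Consequently every cycle of cardinality at most $k+1$ must have $o\ge 1$ and therefore non-negative length, which is precisely the hypothesis of Theorem~\ref{thm:kcycle}.

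Finally I would confirm that no cycle beats the forward cycle, so that the minimum mean length is exactly $-\ell^{\max}/k$ rather than something smaller. For a cycle with $f$ forward and $o$ other arcs the mean length is $(o-f/k)L/(f+o)$, and a one-line manipulation (multiply through by $f+o>0$ and by $1/L>0$) shows this is $\ge -L/k$ precisely because $o(1+1/k)\ge 0$, with equality if and only if $o=0$, i.e. only for the forward cycle. The one point that needs care, and the only real obstacle, is that completing the graph as Theorem~\ref{thm:kcycle} demands could a priori introduce either a short negative cycle or a cheaper mean cycle; assigning every non-forward arc the largest value $+L$ permitted without increasing $\ell^{\max}$ is exactly what makes the counting inequality $f>ok$ go through, simultaneously ruling out both dangers.
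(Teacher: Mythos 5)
Your construction is the same as the paper's: a complete digraph with a distinguished $(k+2)$-cycle whose arcs have length $-\ell^{\max}/k$ and all other arcs of length $+\ell^{\max}$. The proof is correct, and your $f$/$o$ counting argument merely spells out the verification that the paper dismisses with ``it immediately follows.''
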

\begin{proof}
	Let $G$ be a complete directed graph with vertex set 
	${V=\{v_1,v_2,\dots,v_{n}\}}$.
	We will define arc lengths $\ell$ such that all $(k+1)$-cycles in 
	$G$ have non-negative length, 
	but the minimum mean length of a cycle is $-\frac{\ell^{\max}}{k}$. 
	First consider the cycle $C_0=\{v_1,v_2,\dots,v_{k+2}, v_1\}$. 
	Give each arc in $C_0$ a length $-\frac{\ell^{\max}}{k}$.
	Thus $C_0$ has cardinality $k+2$ and mean length $-\frac{\ell^{\max}}{k}$. 
	Now let every other arc $e$ have length $\ell^{\max}$.
	It immediately follows that the only cycle in $G$ with negative length is 
	$C_0$. 
	Thus, all cycles of length at most $k+1$ have non-negative length,
	but the minimum mean length of a cycle is $-\frac{\ell^{\max}}{k}$, 
	as desired.
\end{proof}

\subsection{Welfare Guarantees with Revealed Preference Rules}
Our results from Section \ref{sec:approx-vv} give approximate rationality 
guarantees on individual bidders.
We briefly outline this here. 
By applying the above relaxed revealed preference bidding rules to each bidder, 
we can now obtain guarantees on the overall social welfare of the entire auction. 
For example, consider imposing the relaxed {\sc garp} bidding rules. 
Now suppose each bidder uses a minimum, individually rational, 
$\epsilon$-approximate virtual valuation function that satisfies the 
gross substitutes property. 
It is known that if bidder valuation functions satisfy the gross 
substitutes property then the simultaneous multi-round auction 
(SMRA) will converge to a Walrasian equilibrium and maximize social 
welfare~\cite{Mil00,GS99,KC82}.
Consequently, the output allocation now maximizes virtual welfare to within
an additive factor per bidder. 
One may expect there is some maximum discrepancy (say, in the $L^{\infty}$ norm) 
between the true valuation function and {\em some} virtual approximate 
virtual valuation. 
If so, because the implemented virtual valuations are minimum, 
we can then lower-bound the true social welfare. 
Similarly, best-case bounds follow using the maximum approximate virtual 
valuation function.

\subsection{Alternate Bidding Rules}\label{sec:other}
Interestingly other bidding rules used in practice or proposed in the literature 
can be viewed in the graphical framework. 
For example, bid withdrawals correspond to vertex deletion in the bidding graph,
whilst budget constraints and the Afriat Efficiency Index can be formulated in 
terms of arc-deletion. We briefly describe these applications here.


\subsubsection{Revealed Preference with Budgets.}
Recall that, in Section \ref{sec:auction-rp}, we have assumed that, in the 
quasilinear model, bidders have no budgetary constraints. 
This is not a natural assumption. 
Harsha et al.~\cite{HBP10} explain how to implement budgeted revealed 
preference in a combinatorial auction. 
Their method applies to the case when the fixed budget $B$ is unknown to the 
auction mechanism. 
To do this, upper and lower bounds on feasible budgets are maintained 
dynamically via a linear program. 
It is also straightforward to do this combinatorially using edge-deletion in
the bidding graph; we omit the details as the process resembles that 
of the following subsection.

\subsubsection{The Afriat Efficiency Index.}
Recall that to determine the Afriat Efficiency Index we reveal 
$\bx_t\succeq \by$ only if $\bp_t\cdot\by \le \lambda\cdot \bp_t\cdot\bx_t$ 
where $\lambda<1$. 
This is equivalent, in Afriat's original setting, to removing from the graph 
any arc $(\bx_t, \bx_s)$ for which $\bp_t\cdot\bx_s > \lambda\cdot \bp_t\cdot\bx_t$. 
Of course, for the application of combinatorial auctions, we assume 
quasi-linear utilities. 
Therefore, the appropriate implementation is to remove any arc 
$(\bx_t, \bx_s)$ for which 
\[
	v(\bx_s)-\bp_t\bx_s > \lambda \cdot (v(\bx_t)-\bp_t\bx_t)
	\enspace .
\]
How, though, can we implement this rule as $v()$ is unknown? 
We can simply apply the techniques of Section \ref{sec:approx-vv} and use for 
$v$ the minimum individually rational virtual valuation function.
We can now determine the best choice of $\lambda$ that gives a 
predetermined, $\epsilon$ additive approximation guarantee $\epsilon$. 
This can easily be computed exactly by bisection search over the set of arcs, 
as each arc $a$ has its own critical value $\lambda_a$ at which it will be removed.
The optimal choice arises at the point where the minimum mean cycle in the 
bidding graph rises above $-\epsilon$.  
When $\epsilon=0$, the corresponding choice of $\lambda$ is the anolog
of the Afriat Efficiency Index.

\subsubsection{Revealed Preference with Bid Withdrawals.}
Some iterative multi-item auctions allow for bid withdrawals, most notably 
the simultaneous multi-round auction (SMRA). 
Bid withdrawals may easily be implemented along with revealed preference bidding rules. At time $t$, a bid withdrawal
corresponds to the removal of (a copy of) a vertex $\bx_s$, where $s<t$.
This may be important strategically. To see this, suppose the bid $\bx_t$
is invalid under the {\sc karp}-based bidding rules because it
would induce a negative cycle of cardinality at most $k+1$ in the bidding graph on $\{\bx_1,\bx_2,\dots,\bx_t\}$.
If $\bx_s$ lies on all such negative cycles then $\bx_t$ becomes a valid bid
after the withdrawal of $\bx_s$. Because auctions typically restrict the total number of bid withdrawals 
allowed, the optimal application of bid withdrawals correspond to the problem of
finding small hitting sets for the negative length cycles of cardinality at most $k+1$.

\vspace*{1in}

 \end{document}